\begin{filecontents}{bibtexlogo.sty}
\def\lowBibTeX{{\reset@font\rmfamily B\kern-.05em%
    \raise.0ex\hbox{\scshape i\kern-.025em b}\kern-.08em%
    T\kern-.1667em\lower.7ex\hbox{E}\kern-.125emX}}

\end{filecontents}

\documentclass[12pt]{iopart}

\usepackage{iopams}
\usepackage{harvard}
\usepackage{bibtexlogo}
\usepackage{amsthm}

\newtheorem{theorem}{Theorem}[section]

\newtheorem{define}[theorem]{Definition}

\newcommand{\wh}{Weyl-Heisenberg group }
\newcommand{\fg}{$\mathbf{f}_n(\Phi)$ and $\mathbf{g}_n(\Phi)$ }

\begin{document}
\bibliographystyle{dcu}

\title[]{Group Theoretical Classification of SIC-POVMs}

\author{S. B. Samuel and Z. Gedik}

\address{Faculty of Engineering and Natural Sciences, Sabanci University, Tuzla, Istanbul 34956, Turkey}
\ead{solomon@sabanciuniv.edu , gedik@sabanciuniv.edu}
\vspace{10pt}
\begin{indented}
\item[]October 2023
\end{indented}

\begin{abstract}
The Symmetric Informationally Complete Positive Operator-Valued Measures (SIC-POVMs) are known to exist in all dimensions $\leq 151$ and few higher dimensions as high as $1155$. All known solutions with the exception of the Hoggar solutions are covariant with respect to the Weyl-Heisenberg group and in the case of dimension 3 it has been proven that all SIC-POVMs are Weyl-Heisenberg group covariant. In this work, we introduce two functions with which SIC-POVM Gram matrices can be generated without the group covariance constraint.  We show analytically that the SIC-POVM Gram matrices exist on critical points of surfaces formed by the two functions on a subspace of symmetric matrices and we show numerically that in dimensions 4 to 7, all SIC-POVM Gram matrices lie in disjoint solution "islands". We generate $O(10^6)$ and $O(10^5)$ Gram matrices in dimensions 4 and 5, respectively and $O(10^2)$ Gram matrices in dimensions 6 and 7. For every Gram matrix obtained, we generate the symmetry groups and show that all symmetry groups contain a subgroup of $3n^2$ elements. The elements of the subgroup correspond to the Weyl-Heisenberg group matrices and the order-3 unitaries that generate them. All constructed Gram matrices have a unique generating set. Using this fact, we generate permutation matrices to map the Gram matrices to known Weyl-Heisenberg group covariant solutions. In dimensions 4 and 5, the absence of a solution with a smaller symmetry, strongly suggests that non-group covariant SIC-POVMs cannot be constructed.
\end{abstract}

%
%
%
%
%

\section{Introduction}

Measurements in quantum mechanics are mathematically represented by a set of positive operators known as the positive operator-valued measures (POVMs). The elements of a POVM $E_k$ satisfy the condition $ \sum{E_k} = \mathbb{I} $. For a density matrix $\rho$, the corresponding probability that the $k^{th}$ measurement is obtained is given by $Tr(E_k\rho)$. An informationally complete POVM allows us to reconstruct an arbitrary density matrix from the probabilities of the measurement \cite{prugovevcki1977information,busch1991informationally,nielsen2010quantum}. The most symmetric type of POVM is the Symmetric Informationally Complete POVM, or the  SIC-POVM. The Symmetry in its name refers to the property that the all element of the SIC-POVM are equidistant from one another, i.e., $Tr(E_j E_k) = \frac{n \delta_{jk}+1}{n^2(n+1)}$. This property makes the SIC-POVMs ideal for quantum state tomography by minimizing the total number of measurements needed \cite{scott2006tight}. In addition to tomography SIC-POVMs have applications in Quantum cryptography \cite{durt2008wigner} and the foundational study of QBism \cite{fuchs2010qbism,fuchs2017sic}.

The SIC-POVMs have been constructed in many dimensions using both analytic and numerical methods. The analytic method was first proposed by Zauner in 1999. Zauner showed that a group covariant SIC-POVM can be constructed by using the Weyl-Heisenberg group  \cite{zauner} and constructed solutions in dimensions 2-7. In the same work, Zauner conjectured that a fiducial vector exists in all dimensions, where it is an eigenvector of a special matrix known as the Zauner matrix. A few years later, Renes \cite{renes2004symmetric} independently proposed the Weyl-Heisenberg group for construction of a fiducial vector and explored numerical solutions and computed the complete list of Weyl-Heisenberg group covariant SIC-POVMs in dimensions $\leq 7$. Through further researches on the extended Clifford group, Appleby \cite{appleby2005symmetric,appleby2011lie} and Zhu \cite{zhu2010sic} showed more symmetries of fiducial vectors which simplified the search further.  

The SIC-POVMs are a special type of frames, and frames in general are best described using their Gram matrices \cite{waldron}. For a SIC-POVM, the Gram matrix is formed from the inner products of the vector elements of the SIC-POVM. The Gram matrix allows for the representation of SIC-POVMs up to a unitary equivalence, making it ideal for the study of general solutions. We use the Gram matrix to define two necessary and sufficient conditions to define a SIC-POVM Gram matrix. By not considering the \wh covariance, the number of free parameters becomes of the order of $ n^4 $ which is much larger compared to the Welch bound, but it converges to a solution faster in the dimensions 4-7. 

We derive two functions of the Gram matrix with which we will construct numerical SIC-POVM Gram matrices. In order to characterize the different solutions, we explore symmetries in the functions both analytically and numerically. Analytically, we describe the two trivial symmetries of the equations. Numerically, we show that the functions only have the no other more symmetry other than the trivial symmetries. The functions allow us to treat the SIC-POVM Gram matrices as intersections of two surfaces. We use these to show that SIC-POVMs exist on critical points of both functions and by using the Hessian matrix of the functions, show that all generated solutions are isolated in dimensions 4-7 in agreement with  \cite{bruzda2017quantum}. Lastly, we characterize the Gram matrices we generate the generating set (see \ref{generating set}) associated with the Gram matrix and present the unique generating sets fond in dimensions 4 and 5. In dimensions 6 and 7, the set of numerical solutions we generate are not large enough to make the same claim as the former, but the results in these dimensions have similar properties as dimensions 4 and 5. In dimension 3, the equivalence off all solutions to \wh covariant SIC-POVMs was analytically proven by Hughston \cite{hughston2016surveying}.

\subsection{Construction of the Weyl-Heisenberg Group Covariant SIC-POVMs}

The construction of SIC-POVMs is, in general, a computationally intensive process. This is due to the fact that a general SIC-POVM is defined by $2n^3$ parameters. The number of free parameters is greatly reduced if the SIC-POVM is group covariant. A \wh covariant SIC-POVM only has $2n$ free parameters. This number is further reduced with the symmetries described by Zauner \cite{zauner} and Appleby \cite{appleby2005symmetric}. In low dimensions, group covariant solutions can be derived analytically, but the difficulty increases sharply with increasing dimension. therefore, a numerical method is used to construct solutions in higher dimensions. With enough precision and the symmetries of the solution, algebraic form of a solutions can be constructed, making the numerical method an integral tool in the study of the SIC-POVMs.

The Weyl-Heisenberg group is the generalization of the Pauli group generated by the shift operator and the phase operator, $\hat{X}=\sum_{k=0}^{n-1}{|k+1\rangle \langle k|}$ and $\hat{Z}=\sum_{k=0}^{n-1}{e^{i\frac{2\pi k}{n}}|k\rangle\langle k|}$, respectively. We then define elements of the Weyl-Heisenberg group as shown in equation \ref{W-H group}. Since the Weyl-Heisenberg group is a representation of the $\mathbb{Z}_n \times \mathbb{Z}_n$ group we represent with a vector index $\mathbf{p}^T = (p_1, p_2) $. From the definition, it follows that $\hat{D}_{\mathbf{p}}\hat{D}_{\mathbf{q}} = \tau^{<\mathbf{p},\mathbf{q}>}\hat{D}_{\mathbf{p+q}}$  where,  $\tau=-e^{i \frac{\pi}{n}}$ and $<\mathbf{p},\mathbf{q}> = p_2 q_1 - p_1 q_2$. The Weyl-Heisenberg group is not the only representation for the $\mathbb{Z}_n \times \mathbb{Z}_n$ group, but not all representations give rise to SIC-POVMs \cite{renes2004symmetric}. 

\begin{equation}\label{W-H group}
    \hat{D}_{\mathbf{p}}= \tau^{p_1 p_2} \hat{X}^{p_1}\hat{Z}^{p_1}
\end{equation}

The Weyl-Heisenberg group covariant SIC-POVMs are then defined as the orbit of a fiducial vector $|\psi_0\rangle$ over the Weyl-Heisenberg group, I.e., $\{\hat{D}_{\mathbf{p}}|\psi_0\rangle, \mathbf{p} \in \mathbb{Z}_n \times \mathbb{Z}_n\}$. For Weyl-Heisenberg group covariant solutions, the construction of a SIC-POVM reduces to solving for a fiducial vector $|\psi_0\rangle$ satisfying the set of polynomial equations $|\langle\psi_0| D_\mathbf{p}^{\dagger} D_\mathbf{q} |\psi_0\rangle|^2 = \frac{1}{n+1}$. The resulting set of equations are not trivial but can be simplified due to symmetries possessed by the Weyl-Heisenberg group. In 1999, Zauner \cite{zauner} showed that a fiducial vector, have a symmetry now known as Zauner's symmetry, can be used to simplify the polynomials and allow for an algebraic solution. The symmetry described by Zauner is that a fiducail vector exists in the degenerate subspace of the matrix shown in equation \ref{zauner matrix}.

\begin{equation}\label{zauner matrix}
    \langle j | Z | k \rangle = \frac{e^{i\lambda }}{\sqrt{n}}\tau^{2jk+j^2} , \lambda = \frac{\pi (d-1)}{12}
\end{equation}

The dimension of the eigenbasis of the Zauner matrix is $\lfloor (n+3-2k)/3 \rfloor$ for the eigenvalues $e^{i 2\pi k /3}$. The eigenvalue containing the fiducial vector in most known cases is $k=0$, but in some dimensions, solutions also exists in $k=1$ and $k=2$ \cite{scott2010symmetric,scott2017sics}. Zauner constructed solutions in dimensions 2-7 using the matrix \ref{zauner matrix}. Renes et al. \cite{renes2004symmetric} extended the solution list to dimension $45$. Later, Appleby \cite{appleby2005symmetric} showed that numerical Weyl-Heisenberg group covariant SIC-POVMs constructed by Renes et al.\cite{renes2004symmetric} are also symmetric to an order 3 Clifford unitary. Scott and Grassl \cite{scott2010symmetric} constructed the complete list of distinct SIC-POVMs based on the orbits they form on the extended Clifford group in all dimensions up to $50$. Scott \cite{scott2017sics} further extended the list of symmetries to all dimensions $\leq 121$ and some additional dimensions as high as $323$.  Currently, algebraic or exact solutions are known in the dimensions 2-21 and a hand full of higher dimensions as high as 323 \cite{exactfiducialsolns}. With spacial connection between SIC-POVMs of different dimensions allowing for the construction of analytic solutions in dimension 884 \cite{appleby2017dimension}.

Even with the known symmetries discovered so far, analytical construction of the fiducial vectors is a difficult problem, and often we have to relay on numerical methods to construct fiducial vectors. Numerical solutions have been constructed in every dimension up to dimension 151 and some dimensions as high as 1155  \cite{fuchs2010qbism}. The numerical method of construction is derived from the connection of the SIC-POVMs and spherical designs. The SIC-POVMs are a set of equally spaced vectors in the complex space that form a spherical 2-design \cite{renes2004symmetric}. The implication is that the SIC-POVMs form an equiangular tight frame, and they must satisfy the Welch bounds \ref{welch bound} \cite{welch1974lower}. The lower bound of which was computed by Benedetto and Fickus \cite{benedetto2003finite} using the frame potential \ref{fram potential}. The inequality \ref{welch bound} has $n^2(2n-1)$ parameters in its general form. By imposing group covariance, the free parameters reduces to only $2n-1$  which correspond to the fiducail vector. The number of free parameters is further reduced if additional symmetries exist in the fiducial vectors, such as the Zauner symmetry.

\begin{equation}\label{welch bound}
    \sum_{j,k}^{n^2} {|\langle \psi_j|\psi_k\rangle|^{2m}} \geq n^4 \frac{(n-1)!m!}{(n+m-1)!} , m \in \mathbb{Z}^{+}
\end{equation}

\begin{equation}\label{fram potential}
    Tr(S^2)=\sum_{p,q}^{n^2} {|\langle\psi_0| D_\mathbf{p}^{\dagger} D_\mathbf{q} |\psi_0\rangle|^4} \geq \frac{2 n^3}{n+1}
\end{equation}

If a fiducial vector is an eigenvalue of the  Zauner matrix \ref{zauner matrix}, then the number of free parameters is at most equal to the largest dimension of the degenerate subspaces of the Zauner matrix. This reduces the number of free parameters to $\lfloor (n+3)/3 \rfloor$ in most dimensions. 

The numerical method described above can generate SIC-POVMs without considering group covariance; however, many trials of random initial points are required to get a solution due to many local minima. For our search of general solutions without the condition of group covariance, we will introduce a different method which performs better in low dimensions in the following section.

\section{The Gram Matrices of SIC-POVMs}

The SIC-POVMs are a set of complex vectors that form a tight frame for the complex projective space. Thus, can be characterized using their Gram matrix.

The Gram matrix of a set of vectors $\{\vec{v}_k\}$ consists of the inner product of the vectors $(\vec{v}_j.\vec{v}_k)$. The Gram matrix is a symmetric matrix by definition. If the vectors from a tight frame, the corresponding Gram matrix is a projective matrix. The Gram matrix of a tight frame characterizes the frame uniquely up to a unitary operation, and therefore, we can construct the set of vectors given a Gram matrix up to a unitary equivalence. Thus, it is ideal for studying the equivalence classes of SIC-POVMs without any assumption of group covariance. 

Let the set of projective operators $E_k = \frac{|\psi_k\rangle \langle\psi_k|}{n}$ be a SIC-POVM, then the Gram matrix is given by equation \ref{sic gramian}. The Gram matrix of a SIC-POVM is a $n\times n$ symmetric matrix satisfying $Tr(P^k) = n$. This is because the SIC-POVMs are informationally complete (i.e., $\sum_k{E_k} =  \mathbb{I}$). The reconstruction of the SIC-POVM vectors from the Gram matrix can be found in appendix \ref{sic Gram reconstruction}.

\begin{equation}\label{sic gramian}
\begin{array}{ll}
    P &= \frac{\langle \psi_j | \psi_k \rangle}{n} \\
    &= \frac{1}{n\sqrt{n+1}} \left\{
    \begin{array}{cc}
        e^{i \phi_{jk}} &, j < k \\
        e^{-i \phi_{jk}} &, j > k \\
        \sqrt{n+1} &, j = k
    \end{array}\right.
\end{array}
\end{equation}

We will use the properties of the Gram matrix for the construction of SIC-POVMs. For this, let's define the space of matrices satisfying the conditions $Tr(P)=n$ and $Tr(P^2) = n$, having the form shown above, as  $\mathcal{B} = \{P, \phi_{jk} \in [0,2\pi]\}$. By definition, all SIC-POVM Gram matrices are elements of $\mathcal{B}$. If a matrix $P \in \mathcal{B}$ is a projective matrix, it will also be Gram matrix for a SIC-POVM. For a numerical search, it is more convenient to have a scalar function to check whether $P$ is projective, and for this we can use $Tr(P^k)$. In theorem \ref{trace check for sic gramians}, we show that we only need to show $Tr(P^3) = n$ and $Tr(P^4) = n$ to identify a projective matrix within the space $\mathcal{B}$. 

\begin{theorem}\label{trace check for sic gramians}
    Given a matrix $P \in \mathcal{B}$, the equations $Tr(P^3) = n$ and $Tr(P^4) = n$ are sufficient and enough conditions to show that the matrix $P$ is a projective matrix.
\end{theorem}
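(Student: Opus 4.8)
The plan is to work entirely at the level of the spectrum of $P$. Because $P$ is a Gram matrix, its entries satisfy $P_{kj}=\overline{P_{jk}}$ (the diagonal is real and the off-diagonal phases in \ref{sic gramian} are conjugate across the diagonal), so $P$ is Hermitian and hence diagonalisable with \emph{real} eigenvalues $\lambda_i$. For every $k$ one then has $\mathrm{Tr}(P^k)=\sum_i\lambda_i^{\,k}=:p_k$, the $k$-th power sum of the eigenvalues. A Hermitian matrix is a projector exactly when $P^2=P$, which holds if and only if each eigenvalue satisfies $\lambda_i^2=\lambda_i$, i.e. $\lambda_i\in\{0,1\}$. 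Thus the theorem reduces to showing that the four trace conditions force every eigenvalue into $\{0,1\}$.

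First I would record the hypotheses as statements about power sums: membership in $\mathcal{B}$ supplies $p_1=p_2=n$, and the two additional equations supply $p_3=p_4=n$. The decisive step is to assemble these into a single manifestly nonnegative quantity. I would consider
\[
\sum_{i}\lambda_i^2(\lambda_i-1)^2=\sum_{i}\bigl(\lambda_i^4-2\lambda_i^3+\lambda_i^2\bigr)=p_4-2p_3+p_2 .
\]
Substituting $p_4=p_3=p_2=n$ gives $n-2n+n=0$ on the right. Since each summand $\lambda_i^2(\lambda_i-1)^2$ is a product of two real squares and is therefore nonnegative, a vanishing total forces every summand to vanish, so $\lambda_i=0$ or $\lambda_i=1$ for each $i$. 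Hence $P^2=P$, i.e. $P$ is projective; the remaining condition $p_1=n$ then fixes its rank at $n$, as expected for a SIC-POVM Gram matrix.

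I do not anticipate a genuine obstacle in the verification; the whole content is the observation that the particular combination $p_4-2p_3+p_2$ is the power-sum image of $\sum_i\lambda_i^2(\lambda_i-1)^2$, equivalently that $\lambda^2(\lambda-1)^2$ is (up to scale) the unique degree-four nonnegative polynomial vanishing to second order at both $0$ and $1$. The only point requiring a little care is confirming Hermiticity, and hence the reality of the spectrum, from the piecewise definition of the entries, since the nonnegativity argument collapses if the $\lambda_i$ are allowed to be complex. It is also worth noting that the fourth-power condition is genuinely needed: the three conditions $p_1=p_2=p_3=n$ do not by themselves pin the spectrum to $\{0,1\}$, because the relevant combinations $p_2-p_1$ and $p_3-p_2$ are not sign-definite in the eigenvalues; only the completion to the perfect square $\lambda^2(\lambda-1)^2$ closes the argument.
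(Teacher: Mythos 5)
Your proof is correct, but it takes a genuinely different route from the paper's. The paper also works spectrally, but its mechanism is analytic: it defines $f(x)=\sum_k|\lambda_k|^x$ and $g(x)=\sum_k\lambda_k^x$ as functions of a \emph{real exponent} $x$, observes that $f$ is convex, and argues that the equalities $f(2)=f(4)=n$ together with $g(3)\leq f(3)\leq\frac{1}{2}\bigl(f(2)+f(4)\bigr)=n$ force $f(2)=f(3)=f(4)$, which by strict convexity (unless the spectrum is degenerate) pins $|\lambda_k|\in\{0,1\}$; ruling out $\lambda_k=-1$ is then implicit in the equality $f(3)=g(3)$. Your mechanism is instead purely algebraic: the sum-of-squares identity $p_4-2p_3+p_2=\sum_i\lambda_i^2(\lambda_i-1)^2$ collapses the same three hypotheses into a single nonnegative quantity equal to zero. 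What your route buys is economy and rigor in the corner cases: you never need to interpret $Tr(P^x)$ for non-integer $x$, you need no convexity/equality analysis, and eigenvalues equal to $-1$ are excluded automatically since $\lambda^2(\lambda-1)^2=4>0$ there, a point the paper's write-up passes over without comment. What the paper's route buys is a perspective that extends to other families of trace constraints (any three exponents with the even/odd structure would do), whereas your identity is tailored to the exponents $2,3,4$. Your closing remark that $p_1=p_2=p_3=n$ alone is insufficient is also correct, precisely because $p_3-2p_2+p_1=\sum_i\lambda_i(\lambda_i-1)^2$ is not sign-definite.
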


\begin{proof}
lets define two functions $f(x)$ and $g(x)$ as below, where $\lambda_k(\in \mathbb{R})$ are the eigenvalues of the matrix.

\begin{equation}\label{f and g}
    \begin{array}{ll}
    g(x)=\sum_{k}{\lambda_k^x}\\
    f(x)=\sum_{k}{|\lambda_k|^x}
    \end{array}
\end{equation}

We derive the following results from the definitions of the functions.
\begin{itemize}
    \item $\forall x\in \mathbb{R}^{+}, g(x)=Tr(P^x)$.
    \item $\forall x=2m, m\in \mathbb{N}, f(x)=g(x)$.
    \item $f(x) \geq \mathfrak{Re}[g(x)]$, Since $\forall \lambda_k\in \mathbb{R}, \mathfrak{Re}[\lambda_k^x] \leq |\lambda_k|^x$.
\end{itemize}

The function $f(x)$ is a convex function since $\frac{d^2}{dx^2}f=\sum_{k}{(ln|\lambda_k|)^2|\lambda_k|^x}\geq 0$. therefore, $f(a)=f(b)=f(c) \iff |\lambda_k|\in \{0,1\}$. 

For even powers, $Tr(P^2) = g(2) = f(2) = n$ and $Tr(P^4) = g(4) = f(4) = n$. For odd powers, since $f(x)$ is an upper bound to $g(x)$ and is a convex function, $g(3)\leq f(3)\leq n$. Therefore, if $g(3)=n$, then $f(3)=g(3)$. This shows that $\lambda_k$s are either $0$ or $1$. Finally we conclude that if $Tr(P^3) = n$ and $Tr(P^4) = n$ for a matrix $P \in \mathcal{B}$ if and only if $P$ is a projective matrix.
\end{proof}

We can now derive the equations $Tr(P^3) = n$ and $Tr(P^4) = n$, which we will use to construct SIC-POVMs without the assumption of group covariance. By directly expanding the functions $Tr(P^3)$ and $Tr(P^4)$, where the matrix $P$ takes the form shown in \ref{sic gramian}, we define the functions $\mathbf{f}_n(\Phi) = Tr(P^3)$ and $\mathbf{g}_n(\Phi) = Tr(P^4)$,  where $\Phi$ corresponds to phase parameters as follows:

\begin{equation}\label{tr 3}
    \mathbf{f}_n(\Phi) = \frac{3n-2}{n}+\frac{6}{n^3(n+1)^{\frac{3}{2}}}\sum_{a<b<c}^{n^2}{\cos{(\phi_{ab}+\phi_{bc}-\phi_{ac})}}
\end{equation}

\begin{equation}\label{tr 4}
\begin{array}{ll}
    \mathbf{g}_n(\Phi) = \frac{2 (n^3+2 n^2-n-1)}{n^2 (n+1)}  + \frac{24}{n^4(n+1)^{\frac{3}{2}}}\sum_{a<b<c}^{n^2}{\cos{(\phi_{ab}+\phi_{bc}-\phi_{ac})}}\\ + \frac{8}{n^4(n+1)^2}\sum_{a<b<c<d}^{n^2}{\bigg[\cos{(\phi_{ab}+\phi_{bc}+\phi_{cd}-\phi_{ad})}}\\  +\cos{(\phi_{ab}+\phi_{bd}-\phi_{cd}-\phi_{ac})} +\cos{(\phi_{ac}-\phi_{bc}+\phi_{bd}-\phi_{ad})}\bigg]
\end{array}
\end{equation}

We can think of the phase variables as elements of a $\frac{n^2(n^2-1)}{2}$ dimensional vector, and represent each element of $\mathcal{B}$ by a unique vector up to mod$(2\pi)$. We will refer to the set of phases as the phase vectors of the Gram matrices. For the vector representation of the phase variables, we define a single index $\chi(a,b) = (a-1) n^2-\frac{1}{2} a (a+1)+b $ where the indices $a$ and $b$ correspond to the index in the upper triangle of the Gram matrix. After this point, the phase vector $\Phi$ is represented using the index $\chi(a,b)$  where the value ranges from $1$ to $n^2(n^2-1)/2$.

The equations \fg are presented in a way that makes some symmetries of the Gram matrices obvious. That includes the continuous symmetry of $n^2-1$ dimension and the discrete symmetry to the permutation operation. These symmetries are needed to identify and characterize general SIC-POVM solutions generated using the functions $\mathbf{f}_n$ and $\mathbf{g}_n$.

\section{Continuity and relation of Group Covariant SIC-POVMs in the Gram Space}\label{symmetries of f and g}

We will start by identifying the continuous symmetry of the functions $\mathbf{f}_n(\Phi)$ and $\mathbf{g}_n(\Phi)$. To derive this, let's first define the vectors $\vec{K}_{rst}$ such that the phases inside the cosine functions can be written as a dot product $\vec{K}_{rst}\cdot\Phi$. In the function $\mathbf{f}_n$, the vectors $\vec{K}_{rst}$ have 3 non zero elements. For example, consider the first cosine functions in $\mathbf{f}_n$ which is  $\cos{(\phi_1+\phi_2-\phi_3)}$, then $\vec{K}_{1,2,3} = (1,-1,0,\ldots,0,1,0,\ldots,0)$ has elements $1$ at indices $\chi(1,2)$ and $\chi(2,3)$, -1 at the index $\chi(1,3)$ and $0$ everywhere else. We can then write the cosine function as $\cos{(\vec{K}_{1,2,3} . \Phi)}$. Similarly, we rewrite $\mathbf{f}_n$ as,

\begin{equation}
    \mathbf{f}_n(\Phi) = \frac{3n-2}{n}+\frac{6}{n^3(n+1)^{\frac{3}{2}}} \sum_{r,s,t} {\cos{(\vec{K}_{rst}}.\Phi)}
\end{equation}

The simplest symmetry one would expect from the equations is one where all the cosine functions in the sum remain fixed, and we can check for such a symmetry by simply checking for the linear independence of the vectors $\vec{K}_{rst}$. Since the vectors are $n^2(n^2-1)/2$ dimensional, it suffices to take $n^2(n^2-1)/2$ vectors. With a bit of combinatorics, it is possible to show that the vectors $\vec{K}_{rst}$ span a $(n^2-1)(n^2-2)/2$ space, the proof of which can be found in Appendix \ref{continuous symmetry of f and g}. This means, the functions \fg are invariant in the orthogonal space of the vectors $\vec{K}_{rst}$. We may also understand the continuous symmetry by noting that a unitary operation conserves all trace values and the only continuous unitary operations that form automorphism of $\mathcal{B}$ are $\Gamma=\{e^{i\Omega}, \Omega = \mbox{diag}(x_1,x_2,\ldots,x_{n^2}), x_k \in [0,2\pi]\}$. The unitary operations have $n^2$ parameters and directly translates to a $n^2-1$-dimensional subspace of $\mathcal{B}$. 

The second trivial symmetry we observe is the permutation symmetry. It is obvious the permutation operation is an automorphism of $\mathcal{B}$, and thus will be an invariant of the functions $\mathbf{f}_n(\Phi)$ and $\mathbf{g}_n(\Phi)$. We will represent the permutation operations with a permutation matrix $X_\sigma$. For a Gram matrix of a SIC-POVM, the permutation operations are equivalent to the reordering of the SIC-POVMs. The permutation group generates a large set of Gram matrices which will become important in classifying numerical SIC-POVMs. 

The two trivial transformations described above are both symmetries of the functions and they map the space of matrices $\mathcal{B}$ to its self.  For the purpose of generating general SIC-POVMs however, we don't need the transformations to be automorphism of the entire matrix space $\mathcal{B}$. We only need the transformation to map one SIC Gram matrix to another. One such a transformation we can search for is a continuous transformation that preserves the functions $\mathbf{f}_n(\Phi) = n$ and $\mathbf{g}_n(\Phi) = n$. In other words, are the SIC-POVM Gram matrices path connected in the space of $\Phi$? Through numerical method it has been shown that for dimensions $4-16$, the known Weyl-Heisenberg group covariant SIC-POVMs are disjoint up to $\Gamma$ \cite{bruzda2017quantum}. Using the trace functions, the continuity of SIC-POVM Gram matrices corresponds to the intersection space of the surfaces $\mathbf{f}_n(\Phi) = n$ and $\mathbf{g}_n(\Phi) = n$. The first immediate result is that SIC-POVM Gram matrices exist on a critical point of both functions. This is an expected result at least in the dimensions 4-16 where \wh covariant SIC-POVMs are known to be isolated in \cite{bruzda2017quantum}. In theorem \ref{SICs on critical point}, we show that this is a general result that holds in arbitrary dimensions.

\begin{theorem}\label{SICs on critical point}
    SIC-POVM Gram matrices are located at the common critical points of the surfaces $\mathbf{f}_n(\Phi) = n$ and $\mathbf{g}_n(\Phi) = n$.
\end{theorem}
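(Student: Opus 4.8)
The plan is to prove the statement by showing that at any SIC-POVM Gram matrix $P$ the gradient of each function with respect to the phase vector $\Phi$ vanishes, so that $P$ is simultaneously a critical point of $\mathbf{f}_n(\Phi) = Tr(P^3)$ and of $\mathbf{g}_n(\Phi) = Tr(P^4)$. I would first isolate the two facts that carry the argument. The first is that $Tr(P^2)$ is \emph{identically} equal to $n$ on all of $\mathcal{B}$: from the form in equation \ref{sic gramian} the diagonal entries are fixed at $1/n$ and every off-diagonal entry has modulus $1/(n\sqrt{n+1})$, so $Tr(P^2) = \sum_{j,k}|P_{jk}|^2$ is a phase-independent constant and therefore $\partial\, Tr(P^2)/\partial\phi_{jk} \equiv 0$ for every phase. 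The second fact is that a SIC Gram matrix is idempotent: by Theorem \ref{trace check for sic gramians} it is a projection, hence $P^2 = P$ and consequently $P^{m-1} = P$ for every integer $m \geq 2$ at such a point.

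With these in hand the computation is short. Differentiating a trace power and using cyclicity of the trace to collapse the Leibniz expansion gives
\[
\frac{\partial}{\partial\phi_{jk}} Tr(P^m) = m\, Tr\!\left(P^{m-1}\frac{\partial P}{\partial\phi_{jk}}\right).
\]
I would evaluate this at a SIC point and substitute $P^{m-1} = P$, which turns the right-hand side into $m\, Tr(P\,\partial P/\partial\phi_{jk}) = \frac{m}{2}\,\partial\, Tr(P^2)/\partial\phi_{jk}$. By the first fact this is zero. Specializing to $m = 3$ and $m = 4$ yields $\partial \mathbf{f}_n/\partial\phi_{jk} = 0$ and $\partial \mathbf{g}_n/\partial\phi_{jk} = 0$ for every index $jk$, i.e. $\nabla\mathbf{f}_n = \nabla\mathbf{g}_n = 0$ at every SIC Gram matrix, which is exactly the claim.

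Because the result reduces to this two-line observation, I do not expect a serious computational obstacle; the real content is recognizing that the constancy of $Tr(P^2)$ on $\mathcal{B}$ and the idempotency $P^{m-1}=P$ combine to annihilate the gradient of every higher trace power simultaneously. The one pitfall I would consciously avoid is phrasing the argument through the eigenvalues $\lambda_k$, which are highly degenerate at a SIC point (value $1$ with multiplicity $n$ and value $0$ with multiplicity $n^2-n$) and hence not differentiable there; working directly with the matrix $P$ and the trace, which are smooth polynomial functions of $\Phi$, sidesteps this entirely. I would also remark that the same argument shows every $Tr(P^m)$ with $m\geq 3$ is critical at SIC points, so the choice of the third and fourth powers is dictated only by what is needed to certify projectivity in Theorem \ref{trace check for sic gramians}.
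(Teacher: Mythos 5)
Your proposal is correct and follows essentially the same route as the paper's proof: both expand the trace powers to first order at a SIC point, use idempotency $P^2=P$ to collapse the gradient of $Tr(P^m)$ to $m\,Tr(P\,\partial P/\partial\phi_{jk})$, and then show this common term vanishes, so that $\nabla\mathbf{f}_n$ and $\nabla\mathbf{g}_n$ vanish simultaneously. The only difference is cosmetic and lies in the last step: the paper checks $Tr(G\Delta)=0$ by an explicit element-wise computation using Hermiticity of $G$, whereas you identify the same quantity as $\frac{1}{2}\,\partial\,Tr(P^2)/\partial\phi_{jk}$, which vanishes because $Tr(P^2)\equiv n$ on all of $\mathcal{B}$ --- a slightly cleaner justification of the identical conclusion.
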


\begin{proof}
    The straightforward method to prove the theorem would be to construct the gradients of the two functions and confirm that both gradients are 0. However, this would be an unnecessarily long method. Instead, we will reconstruct the functions by considering an infinitesimal shift of the vector $\vec{\Phi}$.

    Let $G$ be the Gram matrix of a SIC-POVM of $n$-dimensional Hilbert space, and the shift on $\vec{\Phi}$ be $\vec{v}$. We can then write the new Gram matrix as,

    \begin{equation}
        g_{jk}(\vec{\Phi}+\vec{v}) = \frac{e^{i\phi_{jk}+i\lambda v_{jk}}}{n\sqrt{n+1}}.
    \end{equation}
    For $|\vec{v}| << 1$, we expand the exponential to the first order in $\vec{v}$ and approximate the Gram matrix as follows.

    \begin{equation}
        g_{jk}(\vec{\Phi}+ \vec{v}) \simeq \frac{e^{i\phi_{jk}}}{n\sqrt{n+1}} + i v_{jk}\frac{e^{i\phi_{jk}}}{n\sqrt{n+1}}
    \end{equation}

    \begin{equation}
        G(\vec{\Phi}+\vec{v}) \simeq G(\vec{\Phi}) + \Delta
    \end{equation}

    We can now compute the trace cube and fourth power and approximate it to the first order in the shift $\vec{v}$. Since the Gram matrices of SIC-POVMs are projective matrices, we can reduce all powers of $G(\vec{\Phi})^k$ to $G(\vec{\Phi})$ greatly reducing the calculation.
    \begin{equation}\label{grad f}
    \begin{array}{ll}
        Tr(G(\vec{\Phi}+\vec{v})^3) &\simeq Tr(G(\vec{\Phi})^3+3 G^2 \Delta + O(\Delta^2)) \nonumber \\
        & \simeq Tr(G(\vec{\Phi})+3 G \Delta + O(\Delta^2)) \nonumber \\
        & \simeq n +3 Tr(G(\vec{\Phi}) \Delta)
        \end{array}
    \end{equation}
Similarly,
    \begin{equation}\label{grad g}
    \begin{array}{ll}
    Tr(G(\vec{\Phi}+\vec{\delta})^4) &\simeq Tr(G(\vec{\Phi})^4+4 G^2 \Delta + O(\Delta^2)) \nonumber \\ 
    & \simeq n +4 Tr(G(\vec{\Phi}) \Delta)
    \end{array}
    \end{equation}

    Note that the first order correction on both functions depends on the term $Tr(G(\vec{\Phi}) \Delta)$. This shows that if the first order correction of both functions vanishes simultaneously. We then expand the first order correction in to its elements and group the terms based on $\delta_{jk}$.

    \begin{equation}\label{gradiant zero}
    \begin{array}{ll}
        Tr(G(\vec{\Phi}) \Delta) &= \sum_{ab}{i \frac{e^{i\phi_{ab}}}{n\sqrt{n+1}}  v_{ba}\frac{e^{i\phi_{ba}}}{n\sqrt{n+1}}} , \mbox{ where } a\neq b \nonumber \\
        &= \frac{i}{n^2(n+1)} \sum_{ab}{e^{i(\phi_{ab}+\phi_{ba})} v_{ba}}\nonumber \\
        &= 0
        \end{array}
    \end{equation}

    Since the first order correction in equations \ref{grad f} and \ref{grad g} is equal to $\vec{v} \cdot \nabla \mathbf{f}_n$ and  $\vec{v} \cdot \nabla \mathbf{g}_n$ respectively, and vanished for both functions, all SIC-POVM Gram matrices exist on critical points of both functions.
    
\end{proof}

The SIC-POVM Gram matrices exist on critical points of the functions \fg, which is a necessary condition for the solutions to be isolated. However, this does not imply that the solutions are isolated. To achieve that, we need to generate the Hessian matrices of the two functions. The null space of Hessian matrices determine the continuous subspace of SIC-POVM Gram matrices. To compute the Hessian matrix, we simply expand the Gram matrix to the second order. The trace functions to the second order correction take form $ G(\vec{\Phi}+\vec{v}) \simeq G(\vec{\Phi})+ G^2 \Delta^{(1)} + G \Delta^{(2)} + O(\Delta^2))$, where $\Delta^{(1)} = i e^{i \phi_{ab}}v_{ab}/n\sqrt{n+1}$ and $\Delta^{(2)} = - e^{i \phi_{ab}}v_{ab}^2/n\sqrt{n+1}$. Then the trace of \fg can be expanded to the second order, from which we generate the Hessian matrices.

\begin{equation}\label{grad f to 2nd order}
        Tr(G(\vec{\Phi}+\vec{v})^3) \simeq n + 3 (Tr(G \Delta^{(2)}) + Tr(G \Delta^{(1)} \Delta^{(1)})) + O(\Delta^3)) 
\end{equation}

 \begin{equation}\label{grad g to 2nd order}
    \begin{array}{l}
        Tr(G(\vec{\Phi}+\vec{v})^4) \simeq n + 4 (Tr(G \Delta^{(2)}) + Tr(G \Delta^{(1)} \Delta^{(1)})) \\ + 2 Tr(G \Delta^{(1)}G \Delta^{(1)}) + O(\Delta^3))
    \end{array}
\end{equation}

The second order correction of a functions is given by the Hessian matrix as $\frac{1}{2} H_{jk} x_j x_k$. In equations \ref{grad f to 2nd order} and \ref{grad g to 2nd order}, the variables are the vector elements $v_{ab}$, and by extracting the terms of the sum based on the vector elements, we generate the Hessian matrices of the two functions. Unlike the gradient case, the Hessian matrices are difficult to simplify, and generating a general result of continuity is challenging. As is evident from expressions, the Hessian matrices involve the triple products of the SIC-POVM, and general proof of continuity requires further restrictions on these products. However, numerically, we can still use the expression to explore local structure of the surfaces defined by \fg. In dimension 3, the intersection of the null spaces of the two matrices form a $10$-dimensional space. This means that two free parameters exist which describe the continuous fiducial vectors that have been constructed in dimension $3$. In dimensions $4-7$, the intersection of the null space of the two Hessian matrices form a $n^2-1$ dimensional space, which is the continuous symmetry $\Gamma$ introduced above. We explore the local structures of general solutions by using the numerical solutions in the next section. 

\section{Characteristics of General Numerical solutions}\label{numerical sics}

Numerical construction of Weyl-Heisenberg group covariant SIC-POVMs is achieved by using the inequality \ref{fram potential}. The search would typically require many random initial points because the function contains local minima. In principle the Welch bounds \ref{welch bound} can be applied to construct SIC-POVMs without the assumption of group covariance. However, practically, due to the many local minima, the method becomes impractical. Instead, we will search for a valid SIC-POVM Gram matrix by searching for the intersection of the functions \fg as shown in equations \ref{tr 3} and \ref{tr 4}. This increases the free parameters to $O(n^4)$, but in dimensions $\leq 5$, we don't encounter any local minima and in dimensions $6$ and $7$, the frequency of local minima to the global is manageable. We introduce a new functions $ \mathbf{S}( \Phi) = (f(\Phi)-n)^2+(g(\Phi)-n)^2 \geq 0$ to find points satisfying both conditions simultaneously. The method of search used is conjugate gradient on the software Wolfram Mathematica. In order to speed up the search, the SIC-POVMs we generated are set to accuracy of $10^{-8}$ which is enough for the characterisation of the solutions. To confirmation that the solutions are indeed SIC-POVMs and not a very close approximate, we refined the solutions to $100$ digits. 

From the matrices $\in \mathcal{B}$, we know that each solution have a trivial symmetry generated by the diagonal unitary operations. The results of Bruzda eq al. \cite{bruzda2017quantum}, which we have also found using the trace functions, show the existence of a connected subset of SIC-POVM Gram matrices of $n^2-1$-dimensional space in dimensions 4-7. These "islands" of SIC-POVM Gram matrices are further connected though the permutation operations. We say a two solutions are distinct from each other if they are not in the same island of Gram matrices and are on different orbits of the permutation operators.

The islands of SIC-POVM Gram matrices ($\in \mathcal{B}$) can easily be described in the vector space of $\vec{\Phi}$ as $n^2-1$-dimensional planes, as shown in section \ref{symmetries of f and g}. The Bargmann invariants of the SIC-POVMs are the ideal tools to check if two SIC-POVMs \cite{appleby2011lie} belong to the same island. Let's explain the structure of these islands and briefly discuss how to group solutions using the Bargmann invariants. The islands of solutions are formed by the diagonal unitary operations on the $n^2$ dimensional Hilbert space. Since the Gram matrices of two SIC-POVMs equivalent up to a unitary operator are identical, a global phase shift on all the SIC-POVMs does not generate a different Gram matrix. This means that the islands are generated by unitary operators which are elements of $SU(n^2)$. Thus, the operators can be represented by a set of $n^2-1$ free parameters. Let the diagonal unitary operator $U$ be $U = e^{i\Omega}$ where $\Omega = \mbox{diag}(c_1,c_2,\ldots,c_{n^2})$, and  $\sum_k {c_k}=0$. Applying the operator on a $P\in \mathcal{B}$ as $P' = U^\dagger P U$, we generate a different Gram matrix in the same island as $P$.

\begin{equation}
    P' =\frac{1}{n\sqrt{n+1}}\left\{
    \begin{array}{ll}
        e^{i (\phi_{jk}-c_j+c_k)} &, j < k \\
        e^{-i (\phi_{jk}-c_j+c_k)} &, j > k \\
        \sqrt{n+1} &, j = k
    \end{array}\right.
\end{equation}

This corresponds to a shift in the phase vector as $\phi_{jk} \mapsto \phi_{jk} - c_j +c_k$, forming the $n^2-1$-dimensional plane. The Bargmann invariants of a SIC-POVM are given by the trace of triple products of the  $T_{ijk}= n^3 Tr(E_i,E_j,E_k)$. If we generate the Bargmann invariant of the Gram matrix $P'$, we see that the variables $c_k$ disappear, and the entire island has unique $T_{ijk}$s. Conversely, if two SIC-POVMs have identical Bargmann invariants, then fixing one of the indices gives us two identical Gram matrices, which indicates that the two SIC-POVMs belong to the same island \cite{appleby2011lie}. By using the Bargmann invariants, we filter all generated solutions so that all belong to different islands.

The second symmetry of the functions \fg is due to the permutation operations $\sigma$. Not all permutation operations however connect different islands as shown in theorem \ref{permutation op. automorphic}. Compared to the total number of permutations possible, the number of automorphisms is much less, and the total number of islands is $(\approx n^2!)$. 

\begin{theorem}\label{permutation op. automorphic}
    A permutation operation $X_\sigma$ that form an automorphism of an island of Gram matrices if and only if there exists a projective unitary operations $V$ and SIC-POVM vectors $\{|\psi_k\rangle\}$ such that,
    \begin{equation}
        V|\psi_k\rangle \mapsto |\psi_{\sigma(k)}\rangle \mbox{ , } \forall k\in \mathbb{Z}_n\times\mathbb{Z}_n
    \end{equation}
    where $\sigma$ is the permutation of indices.
\end{theorem}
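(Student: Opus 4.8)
The plan is to reduce the whole statement to the reconstruction property of Gram matrices established in Appendix~\ref{sic Gram reconstruction}: two spanning sets of vectors have the same Gram matrix if and only if they are carried into one another by a single unitary. Since a SIC-POVM is informationally complete, the vectors $\{|\psi_k\rangle\}$ span the Hilbert space and this property applies to them. I would also use the island characterization given just above the theorem, namely that two Gram matrices lie in the same island precisely when they are related by a diagonal unitary $D=e^{i\Omega}$ acting by conjugation, which implements the reparametrization $\phi_{jk}\mapsto\phi_{jk}-c_j+c_k$. Throughout I keep in mind that the physical data are the projectors $E_k$, so each $|\psi_k\rangle$ is fixed only up to an individual phase; the freedom in these phase choices is exactly what an island encodes, and this is why the $V$ produced in the theorem is only a projective unitary.

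For the direction $(\Leftarrow)$ I would suppose a projective unitary $V$ satisfies $V|\psi_k\rangle=e^{i\theta_k}|\psi_{\sigma(k)}\rangle$. Because a global unitary preserves every inner product, the Gram matrix of $\{V|\psi_k\rangle\}$ equals the Gram matrix $P$ of $\{|\psi_k\rangle\}$. On the other hand, the Gram matrix of $\{V|\psi_k\rangle\}=\{e^{i\theta_k}|\psi_{\sigma(k)}\rangle\}$ is the permuted matrix $X_\sigma^\dagger P X_\sigma$ dressed by the diagonal phases $e^{i\theta_k}$. Equating the two expressions shows that $X_\sigma^\dagger P X_\sigma$ and $P$ differ only by a diagonal unitary and therefore lie in the same island, so $X_\sigma$ is an automorphism of that island.

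For the converse $(\Rightarrow)$ I would assume $X_\sigma$ is an automorphism, so that $X_\sigma^\dagger P X_\sigma=D^\dagger P D$ for some diagonal $D=\mathrm{diag}(e^{ic_k})$. The left-hand side is the Gram matrix of the permuted set $\{|\psi_{\sigma(k)}\rangle\}$, while the right-hand side is the Gram matrix of the phase-shifted set $\{e^{ic_k}|\psi_k\rangle\}$. Both sets span the Hilbert space, so the reconstruction property yields a unitary $V$ with $V\bigl(e^{ic_k}|\psi_k\rangle\bigr)=|\psi_{\sigma(k)}\rangle$, i.e.\ $V|\psi_k\rangle=e^{-ic_k}|\psi_{\sigma(k)}\rangle$, which is the asserted relation once the phase $e^{-ic_k}$ is discarded at the projector level.

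The main obstacle I expect is the careful bookkeeping of phases rather than any deep structural difficulty. Since each $|\psi_k\rangle$ is defined only up to a phase, I must verify that the island relation --- a diagonal unitary acting by conjugation on $P$ --- corresponds exactly to a relabelling of these phase choices, and that the unitary extracted from the reconstruction descends to a well-defined element of the projective unitary group independently of those choices. Once this correspondence between diagonal phases on the Gram matrix and phase conventions on the SIC vectors is pinned down, both implications follow from the single fact that equal Gram matrices of spanning sets force unitary equivalence.
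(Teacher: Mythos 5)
Your proposal is correct, and its engine is the same as the paper's: the fact that two spanning sets with equal Gram matrices are related by a unitary, combined with the characterization of an island as a diagonal-conjugation orbit. The differences are worth noting, and they favor your write-up. First, the paper's proof only establishes the forward direction: it assumes $X_\sigma$ is an island automorphism, reconstructs an intermediate SIC $\{|\psi'_k\rangle\}$ from $P' = X_\sigma^\dagger P X_\sigma$, obtains $|\psi'_k\rangle = U|\psi_k\rangle$ from the island relation and $|\psi_{\sigma(k)}\rangle = C|\psi'_k\rangle$ from equality of Gram matrices, and sets $V = CU$; the converse implication of the ``if and only if'' is never argued. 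You prove both directions, and your $(\Leftarrow)$ argument (a unitary preserves the Gram matrix, so the permuted matrix differs from $P$ by a diagonal conjugation) is exactly the missing half. Second, your $(\Rightarrow)$ direction is a streamlined version of the paper's: by writing the island relation explicitly as $X_\sigma^\dagger P X_\sigma = D^\dagger P D$ and recognizing the right-hand side as the Gram matrix of the phase-shifted set $\{e^{ic_k}|\psi_k\rangle\}$, you need only one application of the reconstruction property rather than the paper's composition of two unitaries. Third, your phase bookkeeping is more honest: the paper writes $|\psi'_k\rangle = U|\psi_k\rangle$ for Gram matrices in the same island, which strictly holds only up to the per-vector phases $e^{ic_k}$ --- precisely the subtlety you flag, and the reason $V$ is only projective. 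The one point to tighten in your argument (glossed over by the paper as well) is that mapping the single matrix $P$ into its own island suffices for $X_\sigma$ to be an automorphism of the whole island; this follows because $X_\sigma^\dagger \Gamma X_\sigma$ is again diagonal for any diagonal $\Gamma$, so the permutation carries the entire diagonal-conjugation orbit of $P$ into itself.
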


\begin{proof}
   Let $P$ and $P'$ be SIC-POVM Gram matrices, where $P' = X_\sigma^\dagger P X_\sigma$ and the two Gram matrices belong to the same island of solutions. We then construct two sets of SIC-POVMs $|\psi_k\rangle$ and $|\psi'_k\rangle$ from the Gram matrices, respectively. Since both Gram matrices belong to the same island of solutions, there necessarily exist projective unitary matrix that maps one to the other.

   \begin{equation}
       |\psi'_k\rangle = U |\psi_k\rangle 
   \end{equation}

   On the other hand, we can reconstruct the Gram matrix we have by simply applying the permutation on the SIC-POVM: $|\psi_k\rangle \mapsto |\psi_{\sigma(k)}\rangle$. At this point, we have two SIC-POVMs that generate an identical Gram matrix and hence are unitarily equivalent to each other as $|\psi_{\sigma(k)}\rangle = C |\psi'_k\rangle$, where $C \in U(n)$.
   
   We can then construct a projective unitary operation that maps the $|\psi_k\rangle$ to $|\psi_{\sigma(k)}\rangle$.

   \begin{equation}
       |\psi_{\sigma(k)}\rangle = V |\psi_k\rangle \mbox{ , where } V = CU
   \end{equation}
   
\end{proof}

Permutations that generate automorphisms indicate the symmetries possessed by the SIC-POVMs in a given island. For the purpose of describing the group-covariant SIC-POVM islands, we will focus on automorphisms that have fixed points in a given island of SIC-POVM Gram matrices. 

The Gram matrices constructed with the functions \fg can be an element of any island of solution, and checking the equivalence of any two Gram matrices is difficult due to the large number of possible permutations. All permutations that do not form an automorphism of a given island, generate SIC-POVMs that are not unitarily equivalent to the initial SIC-POVMs. To identify if two Gram matrices are equivalent, one can use the Gram matrices. Before explicitly constructing the permutation operations connecting two islands, we will define the generating set of the Bargmann invariant.

\begin{define}\label{generating set}
    The generating set of the Bargmann invariant $gen[T_{ijk}]$ is a set containing all the phases that appear in the tensor of the Bargmann invariants i.e., $T_{ijk}=Tr(\Pi_i\Pi_j\Pi_k)$.
\end{define}

If two SIC-POVMs have different generating set, then the two Gram matrices are not equivalent up to unitary operation. This is because two SIC-POVMs are equivalent up to a projective unitary transformation if and only if their Bargmann invariants are identical \cite{appleby2011lie}. The generating set is the first criteria to determine equivalence of two SIC-POVMs. However, if the generating set of two SIC-POVMs is the same, the two SIC-POVMs are not necessarily equivalent. therefore, for solutions that do have a common generating set, we need to explicitly construct the permutation operation in order to show their equivalence. We will use the Gram matrix instead of the tensor of Bargmann invariants for the construction of the permutation operations. 

In order to compare the Gram matrices belonging to different island of solutions, we first define a means of identifying the islands in a way that allows us to construct a permutation operations between solutions. The best choice for this is to shift the phases of a given row and column to 0. We can do this by using the continuous symmetry of the islands.

\begin{define}
Let $\Phi$ be an element of a SIC island. Then apply the phase shift $\Gamma_{\lambda_k}(\Phi)$ such that, $\Phi \mapsto \Phi_a$ where $ \Phi_a =\{(\phi_{j,k}+(\phi_{a,j}-\phi_{a,k})),j<k \}$.
\end{define}

By reducing the 1st row and column phases to 0 (i.e., $\Phi_1$) and taking the modulus of all phases, we uniquely identify an island of SIC-POVMs. The advantage of such a phase shift is that all entries of the Gram matrix will reduce to elements of the generating set, as is evident from the form of $\Phi_0$ above. We will briefly discuss the symmetries of the islands in dimensions 4-7 one by one.

\subsection{Dimension 4}
In dimension 4, the functions \fg yield general solutions very quickly despite the large number of parameters. From the numerical solutions, we observe that all generated solutions have the generating set \ref{gennum4}, indicating that all the solutions may be equivalent up to permutation. Notice that the generating set is the same if we multiply all the phases by $-1$, which means the set also characterizes anti-unitarily equivalent SIC-POVMs.

\begin{equation}\label{gennum4}
    \begin{array}{l}
    gen[T_{ijk}] = \{0,0.33312,0.571437,0.666239,0.904557,\\0.999359,1.5708, 1.90392,2.47535,3.80783,4.37927,\\4.71239,5.28383,5.37863,5.61695,5.71175,5.95007\}
    \end{array}
\end{equation}

After projecting the Gram matrices onto a unique point in their respective island, the generating phases are distributed in the matrices with the respective frequencies given below.

\begin{equation}\label{freqgennum4}
   \{49,18,9,18,9,18,18,9,9,18,18,18,18,6,6,6,6\}
\end{equation}

To construct permutation operation from two islands, we start with the indices of the lowest frequency and work our way up. As a result, for all numerical solutions constructed ($>2\times 10^5$), a unique permutation was constructed, mapping each solution to a Weyl-Heisenberg group-covariant solution. 

Projecting the island on to a single point $\Phi_a$ allows us to also generate an automorphism $Aut(\Phi_a)=Y_a$ that fixes the $a^{th}$ index. By going through all the possible values of $a$, we construct $16$ automorphisms. Each of the automorphisms are formed by cycles of three indices and one fixed index, meaning the automorphisms are all of order-3, $Y_a^3 = \mathbb{I}$. The matrices $Y_a$ generate a group of automorphisms $<Y_a>$, which has 48 elements, which contains a subgroup of $H$ with $16$ elements, where $h_k^4 = \mathbb{I}, \forall h_k \in H/\mathbb{I}$. 

The matrices $Y_a$ indicate that the SIC-POVMs in the given island can be found as eigenvalues of unitary which fixes the set of vectors of a given SIC-POVM. The order-4 matrices of the group $H$, indicate that the SIC-POVMs of the island are covariant with respect to a group isomorphic to $H$. If the island is equivalent to a Weyl-Heisenberg group covariant SIC-POVMs, the group $H$ is isomorphic the Weyl-Heisenberg group, and the elements $Y_a$ correspond to order-3 unitaries such as the Zauner matrix and other similar unitaries. 

\subsection{Dimension 5}

In dimension 5, the numerical solutions converge relatively fast without for all arbitrary initial search points used. This allowed us to generate more than $10^4$ numerical solutions. Similar to dimension 4, all solutions constructed have the generating set \ref{gennum5} with 73 elements. 

\begin{equation}\label{gennum5}
\begin{array}{ll}
    gen[T_{ijk}] = \{0,0.00220427,0.00903612,0.0460726,\\
    0.123129,0.23881,0.295211,0.321435,0.434237,\\
    0.487521,0.498761,0.536226,0.675251,0.720412,\\
    0.757876,0.885633,0.905025,1.02595,1.10858,1.11982,\\
    1.24631,1.25884,1.30271,1.37977,1.45798,1.48732,\\
    1.55185,1.60825,1.62764,1.63447,1.93189,2.02575,\\
    2.07904,2.13544,2.36521,2.6501,2.75208,3.5311,\\
3.63309,3.91797,4.14775,4.20415,4.25743,4.3513,\\
4.64871,4.65554,4.67494,4.73134,4.79586,4.8252,\\
4.90342,4.98048,5.02434,5.03688,5.16337,5.17461,\\
5.25723,5.37816,5.39755,5.52531,5.56277,5.60793,\\
5.74696,5.78442,5.79566,5.84895,5.96175,5.98797,\\
6.04437,6.16006,6.23711,6.27415,6.28098\}
\end{array}
\end{equation}

The corresponding frequency is given in  \ref{freqgennum5}. Similar to the case of dimension 4, we generating permutation matrices by starting with the lowest frequency and work our way up. For all constructed Gram matrices, we were able to construct a unique permutation that maps the numerical solutions to a Weyl-Heisenberg group covariant SIC-POVM Gram matrix.

\begin{equation}\label{freqgennum5}
    \begin{array}{ll}
    \{73,9,9,9,9,9,9,9,9,9,3,9,9,9,9,9,3,9,9,9,\\
    9,9,9,9,9,9,9,9,9,9,9,9,9,9,9,9,9,3,3,9,9,9,\\
    9,9,9,9,3,9,3,9,9,9,9,9,9,9,3,3,9,9,9,9,9,9,\\
    3,3,9,3,3,3,3,3,3\}
    \end{array}
\end{equation}

The automorphisms of the islands can be constructed in a similar manner to dimension 4. The group $<Y_a>$ has 75 elements where all the $Y_a^3=\mathbb{I}$ and the subgroup $H$ contains elements satisfying $h_k^5=\mathbb{I}$. Both matrices correspond to the order-3 unitaries fixing a fiducial vector and the Weyl-Heisenberg group in dimension 5. 

\subsection{Dimensions 6 and 7}
In dimensions 6 and 7, in addition to the difficulty of obtaining numerical solutions due to the large number of free parameters, we also encounter local minima. This makes the time needed to generate a numerical solution long. With that being said, we constructed more than $100$ solutions  in dimension 6 and ($20$) solutions in dimension 7. 

The numerical SIC-POVMs generated in these dimensions are all equivalent to the Weyl-Heisenberg group-covariant SIC-POVMs. The corresponding generating set is too big to include here, but since they are equivalent to Weyl-Heisenberg group solutions one can easily form them from the analytic solutions or numerical solutions found in \cite{fuchs2010qbism}. 

The automorphisms of the islands in dimension 6, unlike the previous two dimensions are not $n^2$ in number. In dimensions $n=0\mbox{ } mod \mbox{ } 3$, automorphisms with a single fixed index can not be formed. Instead we have matrices that fix 3 indices per automorphism, forming 12 distinct elements. We can use these to construct the 36 matrices corresponding to  order-6 automorphisms, i.e., isomorphic to Weyl-Heisenberg group in dimension 6.

Dimension 7 Gram matrices similarly allow automorphisms of order-3 and order-7 permutation matrices with 147 elements. Similar to all previous cases, the order-7 permutations are isomorphic to the Weyl-Heisenberg group and the order-3 permutation matrices correspond to the symmetries of the fiducial vectors.

\section{Conclusion}

By using the trace functions \fg, we conclude the following two points. The first and easier conclusion is that the trace functions do not have non-trivial symmetry that generates new island of SIC-POVM Gram matrices. This is because, the numerical solutions depend on the initial point we chose, and any solutions that are related by the symmetry of the two functions would be generated with equal likelihood. For the sake of argument, let's say there is an order-two unitary matrix that fixes the trace functions. This would mean that for every solution generated so far, there is a second Gram matrix satisfying the trace conditions increasing the total number of solutions by a factor of 2. The probability that we only found one of the two types of solutions becomes $1/2^N$ for $N$ trials. The number of solutions we have generated is large enough to reduce the probability to orders less than $10^{-30}$.

The second conclusion is that the SIC-POVM Gram matrices allow automorphism groups. When we consider solutions that are not equivalent up to a symmetry of the trace functions, there is no reason to assume that they would be generated with equal likelihood in the numerical search. In fact, the larger the size of automorphism group of solution islands, the less likely the solutions are to be generated through numerical search. All the solutions we have generated exhibit the Weyl-Heisenberg group symmetry, meaning the automorphism group is isomorphic to the Clifford group which has $n^5\prod_{p|n}(1-p^{-2})$ elements \cite{scott2010symmetric}. If a solution exists having an automorphism group of smaller size, it would be more likely to be found through numerical search, simply due to the larger number of solutions that would exist. This suggests that SIC-POVMs Gram matrices must allow for automorphism groups making them all group covariant. In all the solutions we generated, we see that this group is the Weyl-Heisenberg group. It is important to note that the results don't allow us to rule out solutions that have larger symmetry group, such as to the Hoggar SIC-POVMs in dimension 8, which have a much smaller generating set.

The real space analog of SIC-POVMs is the simplex, which forms a unique and  highly symmetric Gram matrix. We don't have such a symmetry in the SIC-POVM Gram matrix; however from the numerical results in the dimensions 4 and 5, we see that the generating set of SIC-POVMs is unique, and all Gram matrices are equivalent up to a permutation operation. The apparent automorphisms of all the constructed the islands of Gram matrices need an analytic proof as it seems to be not just a method of construction but rather a necessity for the existence of SIC-POVMs.  

\section*{Acknowledgments1}
Part of this work was carried out at Bilimler Köyü at Foça.

\appendix

\section{Construction of a SIC-POVM Elements from the Gram Matrix}\label{sic Gram reconstruction}

Consider the SIC-POVM in dimension $n$, given by the set of equiangular vectors $\{|\nu_k\rangle,\langle\nu_i|\nu_k\rangle=\frac{n\delta_{ik}+1}{n+1}\}$. Let's first define the map,

\begin{equation}
    V:\mathbb{C}^{n^2}\mapsto \mathbb{C}^n
\end{equation}
where $V=(\frac{1}{\sqrt{n}}|\nu_1\rangle,\frac{1}{\sqrt{n}}|\nu_2\rangle,\ldots,\frac{1}{\sqrt{n}}|\nu_{n^2}\rangle) \in \mathbb{C}^{n \times n^2}$. Since the vector $|\nu_k\rangle$ span the complex space $\mathbb{C}^n$, any vector in the $n$-dimensional complex space can written as,

\begin{equation}
    |\psi\rangle=V.\left(\begin{array}{l}
x_1\\ \vdots \\x_{n^2}
\end{array}\right)=\sum_k{x_k\frac{1}{\sqrt{n}}|\nu_k\rangle}.
\end{equation}

We also know that the equiangular vectors form a tight frame; therefore, $V V^\dagger=\sum_{k=1}^{n^2} \frac{|\nu_k\rangle\langle\nu_k|}{n} =\mathbb{I}_{n\times n}$. Using the map $V$, we rewrite the Gram matrix of the SIC-POVM as, $P=V^\dagger V \in \mathbb{C}^{n^2\times n^2}$.

\begin{equation}
    P=V^\dagger V=\frac{1}{n}\sum_{i,k}^{n^2}{\langle \nu_i|\nu_k \rangle}
\end{equation}

We can immediately see the following properties of the Gram matrix $P$:
\begin{enumerate}
    \item The Gram matrix P is a Hermitian, since $P^\dagger=(V^\dagger V)^\dagger = V^\dagger V=P$
    \item $P$ is a projective matrix (i.e., $P^2 = (V^\dagger V)( V^\dagger V)=V^\dagger (V V^\dagger) V = V^\dagger V =P$)
    \item Since $P$ is a projective matrix, it's rank can be computed from its trace.\\ I.e., $rank(P)=Tr(P)=\frac{1}{n}\sum_{k}^{n^2}{\langle \nu_k|\nu_k \rangle}=Tr(V V^\dagger)=Tr(\mathbb{I})=n$.
\end{enumerate}

To construct the set of vectors forming a SIC-POVM from a SIC Gram matrix, we use the fact that  $V V^\dagger = \mathbb{I}$ and $P V^\dagger = V^\dagger$. As a result, the row vectors of $V$ form a projection basis for the Gram matrix. We use the Gram-Schmidt process to construct the basis of the projection subspace. Using the basis vectors as the row vectors we form the $n\times n^2$ matrix $V$. The column vectors of the matrix $V$ are SIC-POVM elements. 

\section{Continuous Symmetry of the $Tr(P^3)$ and $Tr(P^4)$ Functions}\label{continuous symmetry of f and g}

The continuous symmetry of the functions \fg corresponds to the invariance of individual cosine functions in the functions \fg to a translation on the phase vector $\mathbf{\Phi}$. Lets define the vectors $\vec{K}_{abc}$ for $a<b<c$ such that $\cos (\vec{K}_{abc}.\mathbf{\Phi}) = \cos (\phi_{ab}+\phi_{bc}-\phi_{ac})$. The continuous symmetry appears because the vectors $K_{abc}$ do not span a $n^2(n^2-1)/2$ dimensional space.

Let's define a basis formed with the vectors $\mathbf{K} = \{K_{ijk} :i<j<k , j=i+1 $\}. We will then show that any vector $\vec{K}_{abc}$ belongs to the space spanned by $\mathbf{K}$. For simplicity of further calculations, the vector $\vec{K}_{abc}$ is presented with its non zero indices as $(a,b)-(a,c)+(b,c)$.

Given indices $a$, $b$ and $c$ the phases $(\phi_{ab}, \phi_{bc}, \phi_{ac})$, we will start by adding elements of the basis $\mathbf{K}$ incrementally until we generate $\vec{K}_{abc}$.

\begin{equation}
    \begin{array}{l}
      (a,b) - (a,c) + (b,c) \rightarrow -((a,a+1)-(-a,b)+(a+1,b)) \\ +((a,a+1)-(-a,c)+(a+1,c))+(b,c)
    \end{array}
\end{equation}

Notice that the indices $(a+1,b)$ and $(a+1,c)$ are not part of the $\vec{K}_{abc}$, so we will add vectors from $\mathbf{K}$ to cancel the two elements. which will leave two terms, $(a+2,b)$ and $(a+2,c)$ that we need to removed. We repeat the process until $a+m = b-1$, as shown below.

\begin{equation}\label{continuous symmetry vector expansion}
    \begin{array}{ll}
      (a,b) - (a,c) + (b,c) \rightarrow -\bigg((a,a+1)-(a,b)+(a+1,b)\bigg)\\
      +\bigg((a,a+1)-(a,c)+(a+1,c)\bigg)\\
      - \bigg((a+1,a+2)-(a+1,b)+(a+2,b)\bigg)\\
      + \bigg((a+1,a+2)-(a+1,c)+(a+2,c)\bigg) \\
      \vdots\\
      + \bigg((b-1,b)-(b-1,c)+(b,c)\bigg)
    \end{array}
\end{equation}

Note that every index is canceled out in equation \ref{continuous symmetry vector expansion}, and only the term $(a,b)-(a,c)+(b,c)$ remains. This shows that, for arbitrary indices $a,b,$ and $c$, the vector $\vec{K}_{abc}$ is an element of the space spanned by $\mathbf{K}$. 

The phases in the function $\mathbf{g}_n$ can be broken down into sums of the vectors $\vec{K}_{ijk}$ as well. As shown above, all such vectors are elements of the space spanned by $\mathbf{K}$.

\begin{equation}
\begin{array}{l}
      (a,b) - (a,d) + (b,c) + (c,d) \rightarrow \bigg((a,b) - (a,c) + (a,d)\bigg)\\
      + \bigg((a,c) - (a,d) + (c,d)\bigg)
      \end{array}
\end{equation}

\begin{equation}
\begin{array}{l}
      (a,b) - (a,c) + (b,d) - (c,d) \rightarrow \bigg((a,b) - (a,c) + (b,c)\bigg)\\
      - \bigg((b,c) - (b,d) + (c,d)\bigg)
      \end{array}
\end{equation}

\begin{equation}
\begin{array}{l}
      (a,c) - (a,d) - (b,c) + (b,d) \rightarrow \bigg((a,b) - (a,d) + (b,d)\bigg)\\
      - \bigg((a,b) - (a,c) + (b,c)\bigg)
      \end{array}
\end{equation}

The orthogonal vectors to the space spanned by $\mathbf{K}$, form  the continuous symmetry of the functions \fg. We can then show the dimension pf the symmetry as the difference in the dimension of the space $\vec{\mathbf{\Phi}}$ and the space spanned by $\mathbf{K}$ as, $n^2(n^2-1)/2-(n^2-1)(n^2-2)/2$ which is $(n^2-1)$.

\section*{Reference}
\bibliography{ref}

\end{document}